\documentclass[submission,copyright,creativecommons]{eptcs}
 % Name of the event you are submitting to
%\documentclass[paper=a4]{article}

%\documentclass{llncs}
%\usepackage{a4}
\usepackage{ucs}
\usepackage[utf8x]{inputenc}
\usepackage[T1]{fontenc}
\usepackage{graphicx}
\usepackage{amssymb}
\usepackage{longtable}

\begin{document}

%\selectlanguage{\english}
\newcounter{definition}
\newcommand{\definition}{\refstepcounter{definition} {\noindent \bf Definition \thedefinition: }}
\renewcommand{\thedefinition}{\arabic{definition}}

\newcounter{axiom}
\newcommand{\axiom}{\refstepcounter{axiom} {\vspace{0.5cm} \noindent \bf Axiom \theaxiom: }}
\renewcommand{\theaxiom}{\arabic{axiom}}

\newcounter{proposition}
\newcommand{\proposition}{\refstepcounter{proposition} {\vspace{0.5cm} \noindent \bf Proposition \theproposition: }}
\renewcommand{\theproposition}{\arabic{proposition}}

\newcounter{theorem}
\newcommand{\theorem}{\refstepcounter{theorem} {\vspace{0.5cm} \noindent \bf Theorem \thetheorem: }}
\renewcommand{\thetheorem}{\arabic{theorem}}

\newcounter{lemma}
\newcommand{\lemma}{\refstepcounter{proposition} {\vspace{0.5cm} \noindent \bf Lemma \thelemma: }}
\renewcommand{\thelemma}{\arabic{lemma}}

\newcounter{proof}
\newcommand{\proof}{\vspace{0.3cm} \noindent {\it Proof:} }

\title{Processes, Roles and Their Interactions}
\author{Johannes Reich, johannes.reich@sophoscape.de}
\def\titlerunning{Processes, Roles and Their Interactions}
\def\authorrunning{J. Reich}
%\institute{}

\maketitle

\begin{abstract}
Taking an interaction network oriented perspective in informatics raises the challenge to describe deterministic finite systems which take part in networks of nondeterministic interactions. The traditional approach to describe processes as stepwise executable activities which are not based on the ordinarily nondeterministic interaction shows strong centralization tendencies. As suggested in this article, viewing processes and their interactions as complementary can circumvent these centralization tendencies.

The description of both, processes and their interactions is based on the same building blocks, namely finite input output automata (or transducers). Processes are viewed as finite systems that take part in multiple, ordinarily nondeterministic interactions. The interactions between processes are described as protocols. 

The effects of communication between processes as well as the necessary coordination of different interactions within a processes are both based on the restriction of the transition relation of product automata. The channel based outer coupling represents the causal relation between the output and the input of different systems. The coordination condition based inner coupling represents the causal relation between the input and output of a single system.

All steps are illustrated with the example of a network of resource administration processes which is supposed to provide requesting user processes exclusive access to a single resource. 
\end{abstract}

%\setlength{\baselineskip}{5ex}
%\newpage
%\pagenumbering{arabic} 

%%%%%%%%%%%%%%%%%%%%%%%%%%%%%%%%%%%%%%%%%%%%%%%%%%%%%%
%
\section{Introduction}
%
%%%%%%%%%%%%%%%%%%%%%%%%%%%%%%%%%%%%%%%%%%%%%%%%%%%%%%

% Der Reiz der einfachen Erweiterbarkeit, wenn man wie von Dir vorgeschlagen, die Empfängerinformation an die Kanäle zu binden, Ist von beschränktem Nuzen, da die eigentliche Erweiterung der Netzwerke durch den Mechanismus der inneren Produktbildung geschieht. In der Regel kann man eben nicht einfach Kanäle hinzufügen, d.h. Nachrichten zusätzlich an weitere Interaktionspartner schicken, sondern muss zusätzlich noch Koordinierungsarbeit mit den bestehenden schon vorhandenen Interaktionen leisten. 
% => einfache Erweiterbarkeit der Netzwerke! 

% Problem statement
We are living in an open world of interaction relations. As an example, Fig. \ref{fig_network_of_relations} sketches a cutout of a network of business relations between a buyer, a seller, its stock, a post and a bank. Intuitively, these networks have two key features. First, they represent peer relations in the sense that none of the participants is in total control of all the other participants. The participant's actions are, in general, not determined by their interactions. Second, they are open in the sense that we will never be able to describe them completely. Each participant will also interact in many other roles in other networks.

\begin{figure}[ht]
\begin{center}
\includegraphics[width=7cm]{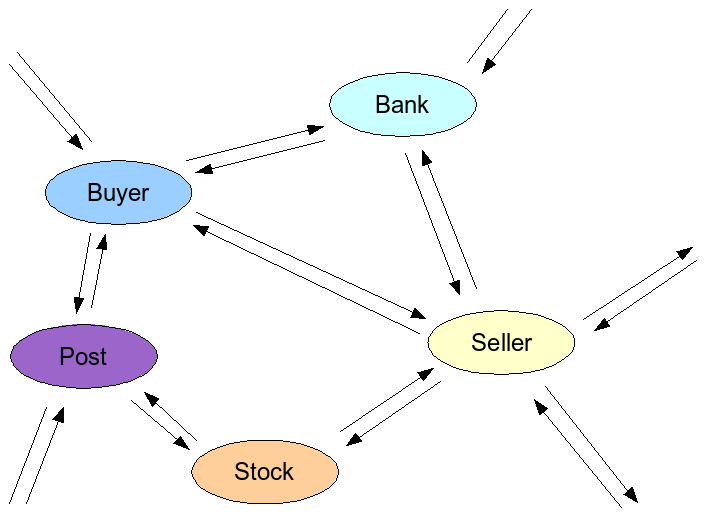}
\end{center}
\caption[]
{\label{fig_network_of_relations}A cutout of an open business network.}
\end{figure}

In informatics, the interacting systems are often called processes. So, taking a process or interaction network oriented perspective in informatics raises the challenge to describe deterministic finite systems which take part in networks of nondeterministic interactions. The traditional approach to model processes as stepwise executable activities leads to strong centralization tendencies (see \cite{reich2011, DBLP:journals/tse/DesaiMCS05}).
The contribution of this article is a formalism based on finite input output automata to describe processes and their interactions in a complementary way avoiding these centralization tendencies. We present two coupling mechanisms for system parts: the outer or interaction coupling, described by protocols and the inner or action coupling, leading to processes. The duality of system specifications from a process perspective and from an interaction perspective actually has been recognized as early as 1976 by Chris A. Vissers \cite{Vissers1976}.

% Approach
The rest of this article is structured as follows. Next, a concrete example of such an open interaction network is introduced. In section \ref{s_building_blocks} the technical building blocks, the finite input output automata, are presented. Section \ref{s_channelbr_protocols} and \ref{s_condbr_processes} define the outer and inner coupling mechanisms. In section \ref{s_related_work} related work is presented and in section \ref{s_discussion} the results are discussed.

%%%%%%%%%%%%%%%%%%%%%%%%%%%%%%%%%%%%%%%%%%%%%%%%%%%%%%
%
\subsection{Preliminaries}
%
%%%%%%%%%%%%%%%%%%%%%%%%%%%%%%%%%%%%%%%%%%%%%%%%%%%%%%
Throughout this article, elements and functions are denoted by small letters, sets and relations by large letters and mathematical structures by large calligraphic letters. 
The components of a structure may be denoted by the structure's symbol or - in case of enumerated structures - index as subscript. The subscript is dropped if it is clear to which structure a component belongs.

For any character set or alphabet $A$, $A^\epsilon := A\cup\{\epsilon\}$ with $\epsilon$ is the empty character. For state value sets $Q$, $Q^\epsilon := Q\cup\{\epsilon\}$ with $\epsilon$ is the undefined value.   If either a character or state value set $A = A_1 \times \dots \times A_n$ is a Cartesian product then $A^\epsilon = A_1^\epsilon \times \dots \times A_n^\epsilon$. Additionally,  a state vector $(p_1,  \dots, p_n)$ where $p_k$ belongs to structure ${\cal A}_k$ is written as $\vec{p}$ and the change of this vector in a position $k$ from $p$ to $q$ is written as $\vec{p} \left[\frac{q}{p}, k\right]$. An n-dimensional vector of characters with the $k$-th component $v$ and the rest $\epsilon$ is written as $\vec{\epsilon}[v,k]=(\epsilon_1, \dots, \epsilon_{k-1}, v, \epsilon_{k+1}, \dots, \epsilon_n)$. 

In the main article, I will talk about processes and their parts in a relational sense, which means that I actually talk about them at the level of their specification. The relation to the functional level of (finite) systems is shown in the appendix.

The part of a process that is directly related to an interaction is called a {\it role}.

%%%%%%%%%%%%%%%%%%%%%%%%%%%%%%%%%%%%%%%%%%%%%%
%
\subsection{Example}
% 
%%%%%%%%%%%%%%%%%%%%%%%%%%%%%%%%%%%%%%%%%%%%%%
In Fig. \ref{fig_resource_manager_as_ring_II} the open network of resource administration processes is shown. In its core, it consists of two different interactions. The first interaction is about the classical problem of mutual exclusion (e.g. \cite{Lynch1996}). Some process requests exclusive access to a resource and a resource administration process admits it.
The second interaction concerns the coordination of the different resource administration processes for administering the exclusive access to the resource, where I chose the token ring protocol. 

\begin{figure}[ht]
\begin{center}
\includegraphics[width=10cm]{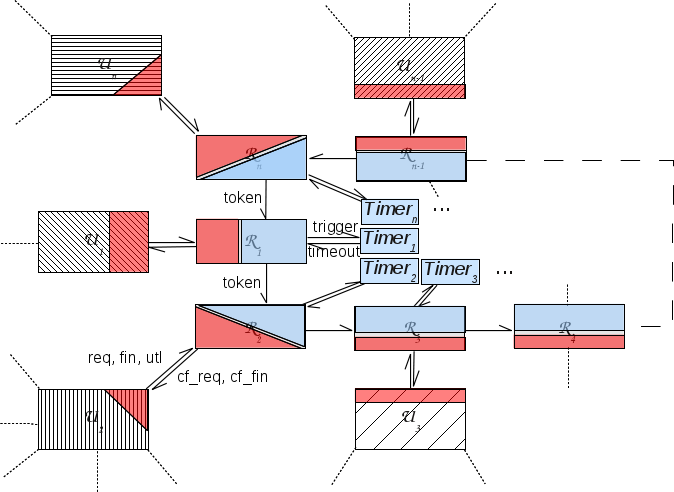}
\end{center}
\caption[]
{\label{fig_resource_manager_as_ring_II} A network of interacting processes where $n$ resource administrator processes ${\cal R}_i$ interact in a ring topology with timers and resource demanding processes ${\cal U}_i$, which themselves interact with an unknown number of other processes. The different colors mark the ''roles'' of the processes, that is those process parts that are directly related to the interactions. Red marks the protocol of mutual exclusion, blue marks the token ring protocol.}
\end{figure}

%%%%%%%%%%%%%%%%%%%%%%%%%%%%%%%%%%%%%%%%%%%%%%
\subsubsection{First Interaction: Protocol of Mutual Exclusion} \label{sss_first_interaction}
%%%%%%%%%%%%%%%%%%%%%%%%%%%%%%%%%%%%%%%%%%%%%%

\begin{figure}[ht]
\begin{center}
\includegraphics[width=8cm]{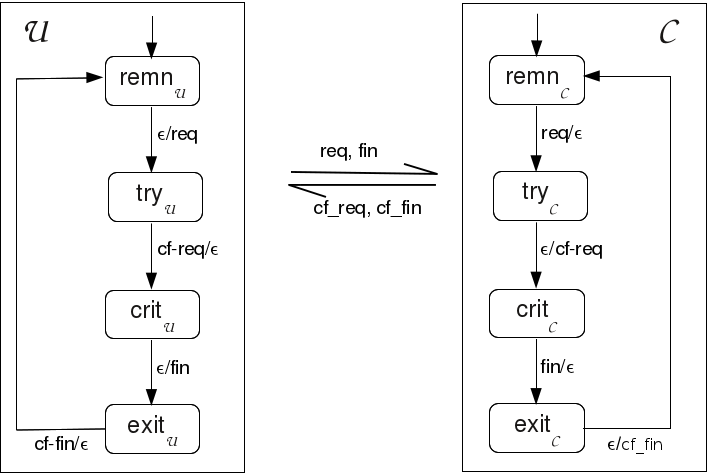}
\end{center}
\caption[]
{\label{fig_mutual_exclusion} \rm State diagram of the protocol of mutual exclusion where a resource administrator process in its role b${\cal C}$ manages the exclusive usage of a single resource for the user process in its role ${\cal U}$.}
\end{figure}

The first interaction with its two roles ${\cal U}$ and ${\cal C}$ is displayed in Fig. \ref{fig_mutual_exclusion}. ${\cal U}$ has the state values $remn_{\cal U}$ (for remainder region), $try_{\cal U}$ (for trying region), $crit_{\cal U}$ (for critical region) and $exit_{\cal U}$ (for exit region), while ${\cal C}$ has the state values $remn_{\cal C}$, $try_{\cal C}$, $crit_{\cal C}$ and $exit_{\cal C}$.

Initially, ${\cal U}$ and ${\cal C}$ are in the state $(remn_{\cal U}, remn_{\cal C})$. If ${\cal U}$ requests the resource, it notifies ${\cal C}$ with a $req$-character (for request) and transits into its $try_{\cal U}$ state. If the resource is free for use, ${\cal C}$ confirms the request with a {\it cf-req}-character, transits into its $crit_{\cal C}$ state and now provides the resource for usage. ${\cal U}$ receives the {\it cf-req}-character, transits to its $crit_{\cal U}$ state and now could use the resource (omitted). ${\cal U}$ releases the resource with sending a $fin$-character (for finalize), which is confirmed by ${\cal C}$ with sending a {\it cf-fin}-character back to ${\cal C}$, so that at the end of the cycle, both, ${\cal U}$ as well as ${\cal C}$ are again in their $remn$ state.

%%%%%%%%%%%%%%%%%%%%%%%%%%%%%%%%%%%%%%%%%%%%%%
\subsubsection{Second Interaction: Token Ring Protocol}
%%%%%%%%%%%%%%%%%%%%%%%%%%%%%%%%%%%%%%%%%%%%%%

The token ring protocol is illustrated in Fig. \ref{fig_mutual_exclusion_token} and consists of a quadruple interaction between a process, its two neighbor processes, and a timer for each process.
Each ${\cal R}_i$ can take one out of three state values $abst$ (for absent), $avlb$ (for available) and $interm$ (for intermediate). Each timer can take either $wait$ or $triggered$.  The protocol has to ensure that one ${\cal R}_i$ is not in $abst$ where it is allowed to provide access to the resource.
Initially, ${\cal R}_1$ is initialized with value $avlb$ and $timer_1$ with $triggered$ while all other processes are initialized with $abst$ and all other timers with $wait$.

${\cal R}_i$ waits in the state $abst$ for reception of the token from ${\cal R}_{i-1}$ (or in case of $i=1$, ${\cal R}_1$ waits for ${\cal R}_n$) to transit into state $avlb$ in case of its reception and to trigger its timer. It rests there until it receives the timeout which results in a transition to $interm$. Finally, ${\cal R}_i$ hands over the token to ${\cal R}_{i+1}$ (or in case of $i=n$, ${\cal R}_n$ hands over to ${\cal R}_1$) and transits back to $abst$.

%%%%%%%%%%%%%%%%%%%%%%%%%%%%%%%%%%%%%%%%%%%%%%%%%%%%%%
%
\section{The Building Blocks: Finite Input/Output Automata}\label{s_building_blocks}
%
%%%%%%%%%%%%%%%%%%%%%%%%%%%%%%%%%%%%%%%%%%%%%%%%%%%%%%

% Deinem Vorschlag folgend, redet man eigentlich die ganze Zeit nur über NFIOAs – oder wie man sie sonst nennt. Den Bezug zu realen finiten Systemen, der mir so am Herzen liegt, könnte man dann ganz zum Schluss bringen.

% Eine gute Erläuterung des "Finite State Transducers" findet sich hier: http://en.wikipedia.org/wiki/Finite_state_transducer, insbesondere auch für die Def. ihres "behaviors"

Finite input output automata will later be used to describe both, processes and their interactions.

\begin{figure}[ht]
\begin{center}
\includegraphics[width=10cm]{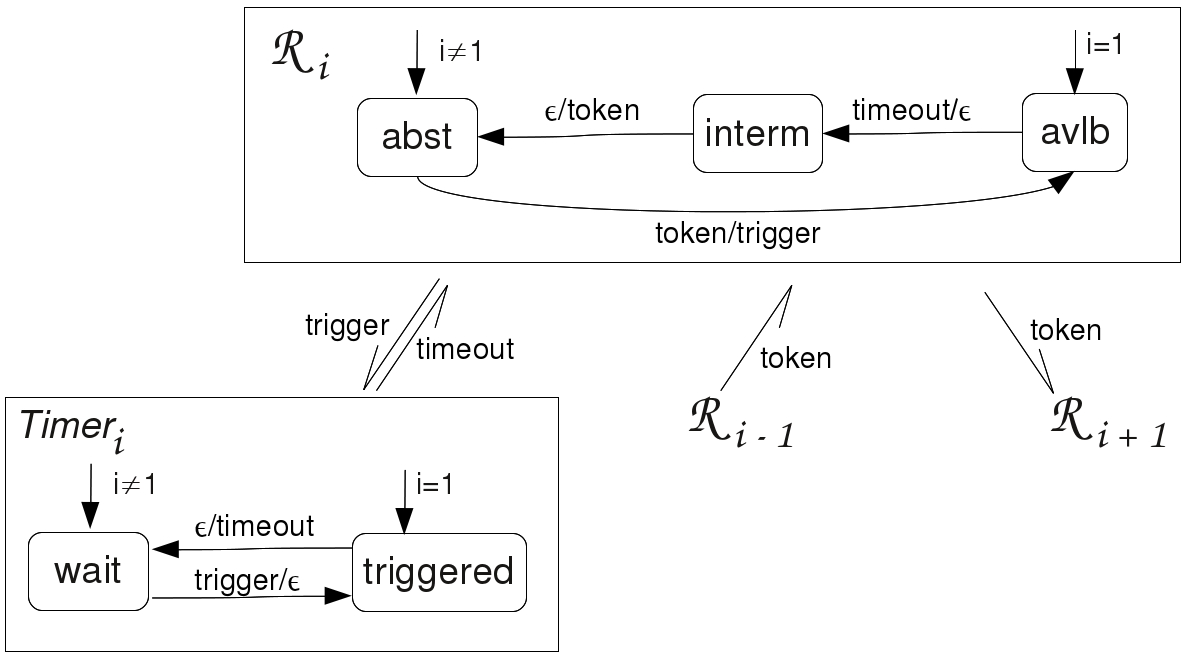}
\end{center}
\caption[]
{\label{fig_mutual_exclusion_token}\rm State diagram of the token ring protocol for managing exclusive access to a single resource by a resource administrator process ${\cal R}_i$.}
\end{figure}

\begin{definition} \label{def_NFIOA} A {\it nondeterministic finite I/O automaton (NFIOA)} is a tuple ${\cal A} = (Q, I, O, q_0, Acc, \Delta)$.  $Q$ is the non-empty finite set of state values, $I$ and $O$ are the finite, possibly empty input and output alphabets. The elements of $Q$, $I$ or $O$ are usually assumed to be vectors. For each element of $I$ and $O$ at most one component is unequal to $\epsilon$. $q_0$ is the initial state value, $Acc$ is the acceptance component and $\Delta \subseteq Q\times Q\times I^\epsilon\times O^\epsilon$ is the transition relation.
\end{definition}

The acceptance component $Acc$ represents the information needed to express the acceptance condition. It depends on the computational model of ''acceptance''. For a finite computation $Acc$ is a set of final state values. For an infinite computation a representative acceptance condition is to accept all runs in which the set of infinitely often occurring state values is an element of the acceptance component (Muller-acceptance), that is $Acc \subseteq 2^{Q}$ \cite{DBLP:conf/dagstuhl/Farwer01}.

Transitions with an $\epsilon$-character as input are called {\it spontaneous}. In case that for each $(p, i) \in Q\times I^\epsilon$ there is at most one transition $(p, q, i, o) \in \Delta$ then $\Delta$ specifies a function $\delta: Q\times I^\epsilon \rightarrow Q\times O^\epsilon$ with $(q,o) = \delta(p, i)$. If also no spontaneous transition exists, we have a deterministic automaton or {\it DFIOA} (a Mealy automaton \cite{Mealy1955}).
% D.h. bezüglich der Definition einer Funktion scheinen epsilon-Übergänge keine Rolle zu spielen, weil sie zu ihr nichts beitragen? Oder sollte man diese Definition erweitern?

%Traditionally, the behavior of an NFIOA is defined as relation $R \subseteq I^* \times O^*$ between the input and output strings which are mapped by successful runs.However, with such a definition, we lose the relation between the individual I/O of each step. 
% TODO: Tracesemantik
% siehe z.B. Wikipedia http://en.wikipedia.org/wiki/Finite_state_transducer

% TODO: Rollenkonzept 

%%%%%%%%%%%%%%%%%%%%%%%%%%%%%%%%%%%%%%%%%%%%%%%%%%%%%%
%
\subsection{Product Automata}
%
%%%%%%%%%%%%%%%%%%%%%%%%%%%%%%%%%%%%%%%%%%%%%%%%%%%%%%
The precondition of any coupling of NFIOAs is to view them as a single automaton where all to-be-coupled NFIOAs are viewed together as a product automaton, stepping synchronously and where all acceptance conditions are to be fulfilled simultaneously.

\begin{definition}\label{def_weakly_synchronized_product} 
The {\it weakly synchronized product} of a set of $n$ NFIOAs ${\cal A}_k$ is defined by NFIOA ${\cal B} = (Q, I, O, \vec{q}_0, Acc, \Delta)$, with $Q_{\cal B} = \mbox{\Large $\times$} Q_k$, $I_{\cal B} = \mbox{\Large $\times$} I_k$, $O_{\cal B} = \mbox{\Large $\times$} O_k$, ${{\stackrel{\rightarrow}{q}}_0}_{\cal B} = ({q_0}_1, \dots, {q_0}_n)$,  the common acceptance component represents the logical conjunction of the individual components, symbolized as $Acc_{\cal B} = \bigwedge Acc_k$, $\Delta_{\cal B} := \{(\vec{p}, \vec{q}, \vec{i}, \vec{o})|$ the components of $\vec{p}$ are reachable states of the ${\cal A}_i$ and ${\cal A}_k$ provides a transition $(p_k, q_k, i_k, o_k)$ with $\vec{q} = \vec{p} \left[\frac{q_k}{p_k}, k\right]$ and $\vec{i} =\epsilon[i_k, k]$ and $\vec{o} = \epsilon[o_k, k]$ $\}$. I also write ${\cal B} = \bigotimes_{i=1}^n {\cal A}_i$.
\end{definition}

As each step of the product automaton is achieved by only one component automaton, the input and output components of the product automaton again differ from epsilon at most in one component.

%%%%%%%%%%%%%%%%%%%%%%%%%%%%%%%%%%%%%%%%%%%%%%%%%%%%%%
%
\section{Channel Based Restriction and Protocols} \label{s_channelbr_protocols}
%
%%%%%%%%%%%%%%%%%%%%%%%%%%%%%%%%%%%%%%%%%%%%%%%%%%%%%%
\begin{figure}[ht]
\begin{center}
\includegraphics[width=12cm]{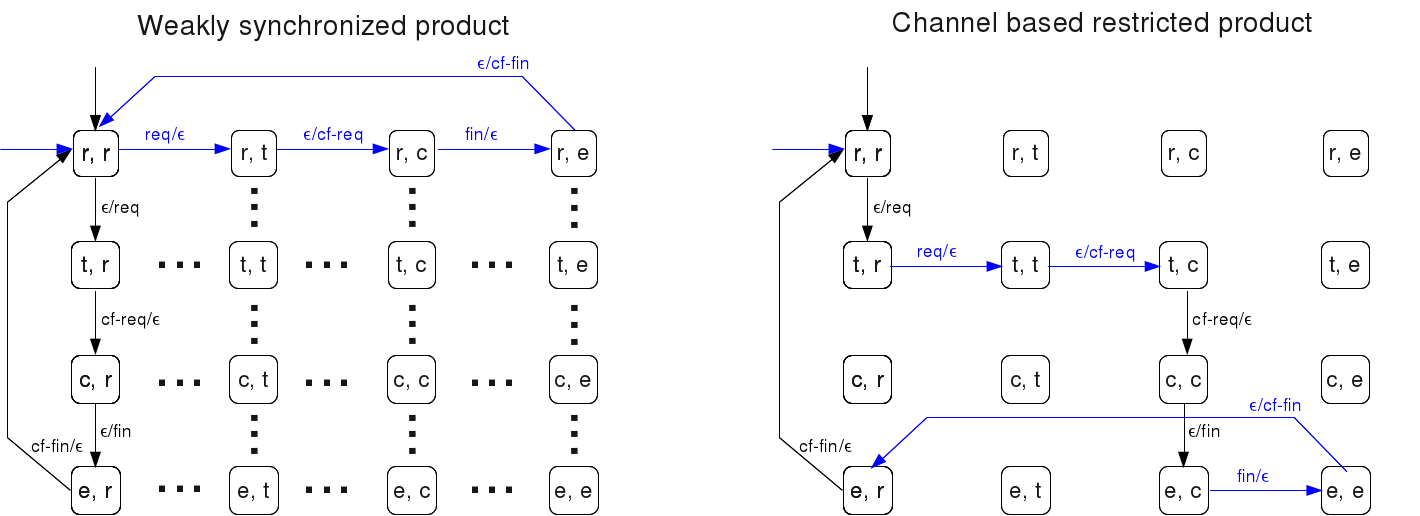}
\end{center}
\caption[]
{\label{fig_mutual_exclusion_synchronized_product} Left, you see the state diagram of the weakly synchronized product of both roles of the protocol of mutual exclusion. Right, you see the channel based restricted product. The names of the state values have been abbreviated.}
\end{figure}

Connecting two systems with a channel imposes certain restrictions on the transition relation of their weakly synchronized product automaton as it requires the receiving system to transit only after the sending system and with the input of the sending system's output. It thereby transforms the weakly synchronized product automaton into a stronger synchronized product automaton with a restricted transition relation. 

As an example, Fig. \ref{fig_mutual_exclusion_synchronized_product} shows the state diagram of both, the weakly synchronized product and the channel based restricted product of the protocol of mutual exclusion.

\begin{definition} \label{def_outer_coupling}
Be ${\cal T}$ an NFIOA where $O_k$    is the $k$-th component of the output alphabet and $I_l$ is the $l$-th component of the input alphabet and $O_k \subseteq I_l$. 
The pair $(k,l)$ is called a {\it channel} if it parametrizes the {\it  channel based restriction (cbr) operator} in the following way leading to the channel based restricted automaton ${\cal P} = cbr_{(k,l)}({\cal T})$.

\begin{enumerate}
\item $Q_{\cal P} = Q_{\cal T}$, $I_{\cal P} = I_{\cal T}$, $O_{\cal P} = O_{\cal T}$, $Acc_{\cal P} = Acc_{\cal T}$, ${q_0}_{\cal P} = {q_0}_{\cal T}$. 
\item Be $p\in Q_{\cal P}$ a reachable state of ${\cal P}$ (and thereby also of ${\cal T}$) which was reached by a transition with an output component $\epsilon[o,k] \in O_{\cal P}$ (i.e. $o\in{O_k}$). Then all transitions of $\Delta_{\cal T}$ which start from $p$ and have $\epsilon[o,l] \in I_{\cal P}$ (i.e. $o\in {I_l}$) as input character are also elements of $\Delta_{\cal P}$. I call the transition providing the character {\it sending} transition, the ensuing transitions {\it receiving} transitions and such a state $p$ {\it excited}.
\item Be $p\in Q_{\cal P}$ either the initial state or a reachable state of ${\cal P}$ which was reached by a transition with a character $o\in O_{\cal P}$ with $o_k = \epsilon$ that is no output on the channel. Then all spontaneous transitions of ${\cal T}$ which start from $p$ and all transitions  of ${\cal T}$ which have an input not associated with the channel are elements of $\Delta_{\cal P}$. I call such a state $p$ {\it relaxed}\footnote{Please note that being excited or relaxed is per se not a static but a dynamic attribute of a state as it depends on how it was entered.}.
\end{enumerate}

Input and output components which are not associated with a channel are called {\it open}. For an illustration of the resulting transition classes, see Tab. \ref{tab_vertauschung_kanaele_Ia}
\end{definition}

\begin{table}[ht]
\begin{center}
\begin{tabular}{c|c|c|c}
Type of start state & Input & Output & Type of target state\\
\hline
relaxed & $\epsilon$ & $\epsilon$ & relaxed\\
relaxed & $\epsilon$ & $\epsilon[\beta, k]$ & excited\\
relaxed & $\epsilon$ & $\epsilon[\beta, i\neq k]$ & relaxed\\
relaxed & $\epsilon[\alpha, i\neq l]$ & $\epsilon$ & relaxed\\
relaxed & $\epsilon[\alpha, i\neq l]$ & $\epsilon[\beta, k]$ & excited\\
relaxed & $\epsilon[\alpha, i\neq l]$ & $\epsilon[\beta, i\neq k]$ & relaxed\\
excited & $\epsilon[\alpha, l]$ & $\epsilon$ & relaxed\\
excited & $\epsilon[\alpha, l]$ & $\epsilon[\beta, k]$ & excited\\
excited & $\epsilon[\alpha, l]$ & $\epsilon[\beta, i\neq k]$ & relaxed
\end{tabular}
\end{center}
\caption[]{\label{tab_vertauschung_kanaele_Ia} This table shows the possible transition classes for channel based restrictions. $\epsilon[\alpha, i\neq l]$ means a class of n-dimensional character vectors with only some fix $v_{i\neq k} = \alpha \neq \epsilon$ where $\alpha$ is a character variable and not an individual character. 

A relaxed state allows for 6 possible transition classes. As input there is either $\epsilon$ for a spontaneous transition or there is a character not associated with the channel $(k,l)$, namely $\epsilon[\alpha, i\neq l]$. As output there is either a character with a non vanishing component $k$ on the channel, leading to a new excited state, or a character with a non vanishing component $i\neq k$ somewhere else, or no output at all.
 
An excited state allows for three possible transition classes as there must be a channel associated input character $\alpha$. The output is the same as for the relaxed state.} 
\end{table}

\begin{proposition}
Be ${\cal T}$ an NFIOA whose input and output alphabet are at least 2 dimensional and two channels $(k,l)$ and $(m,n)$ with $k\neq m$ and $l\neq n$. Then the $cbr$-operator for either channel commute: $cbr_{k,l}(cbr_{m,n}({\cal T})) = cbr_{m,n}(cbr_{k,l}({\cal T}))$. Thus, the $cbr$-operator can also be indexed with a set of channels.
\end{proposition}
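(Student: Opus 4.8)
The plan is to reduce the claim to an equality of transition relations and then to characterise that relation by a single, symmetric condition stated purely in terms of ${\cal T}$. Part~1 of Definition~\ref{def_outer_coupling} copies $Q$, $I$, $O$, $Acc$ and $q_0$ verbatim, so on both sides of the asserted identity these five components already coincide, and the entire content is the equality $\Delta_{cbr_{k,l}(cbr_{m,n}({\cal T}))}=\Delta_{cbr_{m,n}(cbr_{k,l}({\cal T}))}$, where both sides are subsets of $\Delta_{\cal T}$.

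The structural observation I would build on is that the ``excited/relaxed'' attribute a state receives on entry is, by parts~2 and~3 of Definition~\ref{def_outer_coupling}, fixed by the output of the incoming transition: relative to $(k,l)$ it depends only on whether that output lies on component~$k$, relative to $(m,n)$ only on whether it lies on component~$m$. Since an output vector has at most one non-$\epsilon$ component and $k\neq m$, an incoming transition can excite a state for at most one of the two channels, so the two attributes are logically independent; and since an input vector likewise has at most one non-$\epsilon$ component while $l\neq n$, the receiving obligations of the two channels never conflict. This lets me introduce a symmetric notion: call a transition sequence $\tau_1\cdots\tau_r$ starting in $q_0$ \emph{admissible} if it obeys, at every step, the matching rule --- directly after a transition with output $\epsilon[a,k]$ the next input is $\epsilon[a,l]$, directly after a transition with output $\epsilon[a,m]$ the next input is $\epsilon[a,n]$, and directly after a transition with output on neither component~$k$ nor~$m$ (as well as for the very first transition) the transition is spontaneous or carries its input on some component other than $l$ and $n$. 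I would then prove that $\Delta_{cbr_{k,l}(cbr_{m,n}({\cal T}))}$ is exactly the set of $\tau\in\Delta_{\cal T}$ that can be appended admissibly to some admissible path, and that the states reachable in $cbr_{k,l}(cbr_{m,n}({\cal T}))$ are exactly the endpoints of admissible paths. Since ``admissible'' is manifestly invariant under exchanging $(k,l)$ with $(m,n)$, the same argument gives the identical description of $\Delta_{cbr_{m,n}(cbr_{k,l}({\cal T}))}$, which yields commutativity --- and hence that $cbr$ may be indexed by a whole set of channels, as the proposition notes.

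The step I expect to be the real obstacle is the two nested layers of reachability hidden in the composite operator: reachability in $cbr_{k,l}({\cal S})$ is defined relative to $\Delta_{cbr_{k,l}({\cal S})}$, which is defined relative to reachability in ${\cal S}=cbr_{m,n}({\cal T})$, and a single state value may be reached excited along one path and relaxed along another (the footnote to Definition~\ref{def_outer_coupling} already flags this). The clean way to cope is to run the induction not on bare states but on \emph{configurations} $(p,\mathrm{out})$, where $\mathrm{out}$ records --- as a single well-defined datum, again thanks to the at-most-one-component property --- whether the incoming transition's output sat on component~$k$, on component~$m$, or on neither, together with its character in the first two cases. One shows simultaneously, by induction on the length of a shortest witnessing path, that (i) the configurations reachable in $cbr_{k,l}(cbr_{m,n}({\cal T}))$ are precisely those realised by an admissible path, and (ii) from such a configuration the outgoing transitions are exactly the $\Delta_{\cal T}$-transitions whose input meets the constraint that the configuration's $\mathrm{out}$-value imposes under admissibility. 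Verifying (ii) comes down to checking that the restriction newly imposed by the outer $cbr_{k,l}$ and the one already imposed by the inner $cbr_{m,n}$ combine to exactly the admissibility constraint and to nothing stronger --- this is the point that genuinely uses $k\neq m$, $l\neq n$ and the two-dimensionality of the alphabets --- after which the symmetry of the two orders is automatic and the transition-relation equality, hence the proposition, follows.
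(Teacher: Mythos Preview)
Your proposal is correct and follows the same underlying idea as the paper --- show that the combined restriction imposed by the two channels is a symmetric condition --- but you carry it out with considerably more care than the paper does. The paper's own proof is essentially one sentence plus Table~\ref{tab_vertauschung_kanaele_Ib}: it tabulates the nine transition classes arising when both channels are present and observes that the table is invariant under swapping $(k,l)$ with $(m,n)$, concluding that the order of application cannot matter. What the paper leaves implicit is precisely the point you single out as the real obstacle: the nested reachability in the composite operator, and the fact that excited/relaxed is a dynamic attribute depending on how a state was entered. Your move to run the induction on configurations $(p,\mathrm{out})$ rather than bare states, and to characterise both sides via the order-independent notion of admissible paths, is exactly what is needed to make the paper's table argument into an honest proof; the hypotheses $k\neq m$ and $l\neq n$ enter where you say they do. So: same destination, same map, but you actually walk the route.
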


\begin{proof}
To prove the proposition, it suffice to show that the rules, prescribing the proper transitions for the combined operator do not depend on the sequence of application of the operators for both channels. This is illustrated in Tab. \ref{tab_vertauschung_kanaele_Ib}
\end{proof}

\begin{table}[ht]
\begin{center}
\begin{tabular}{c|c|c|c}
Type of start state & Input & Output & Type of target state\\
\hline
relaxed & $\epsilon$ & $\epsilon$ & relaxed\\
relaxed & $\epsilon$ & $\epsilon[\beta, i=k \,\vee\, i=m]$ & excited\\
relaxed & $\epsilon$ & $\epsilon[\beta, i\neq k \,\&\, i\neq m]$ & relaxed\\
relaxed & $\epsilon[\alpha, i\neq l \,\&\, i\neq n]$ & $\epsilon$ & relaxed\\
relaxed & $\epsilon[\alpha, i\neq l \,\&\, i\neq n]$ & $\epsilon[\beta, i=k \,\vee\, i=m]$ & excited\\
relaxed & $\epsilon[\alpha, i\neq l \,\&\, i\neq n]$ & $\epsilon[\beta, i\neq k \,\&\, i\neq m]$ & relaxed\\
excited & $\epsilon[\alpha, i=l \,\vee\, i=n]$ & $\epsilon$ & relaxed\\
excited & $\epsilon[\alpha, i=l \,\vee\, i=n]$ & $\epsilon[\beta, i=k \,\vee\, i=m]$ & excited\\
excited & $\epsilon[\alpha, i=l \,\vee\, i=n]$ & $\epsilon[\beta, i\neq k \,\&\, i\neq m]$ & relaxed
\end{tabular}
\end{center}
\caption[]{\label{tab_vertauschung_kanaele_Ib} This table illustrates the restrictions of 2 consecutive channel operators onto a transition relation. The terminology is the same as in Tab. \ref{tab_vertauschung_kanaele_Ia}

A relaxed state allows for 6 different transition classes.  Either there is a spontaneous transition or there is an input $\alpha$ neither associated with channel $(k,l)$ nor with $(m,n)$. As output there is either a character with a non vanishing component $k$ or $m$ on a channel, each leading to a new excited state, or a character with a non vanishing component $i$ neither equaling $k$ nor $m$, or no output at all, both leading to a new relaxed state. 

An exited state allows for 3 possible transition classes. There must be a channel associated input character $\alpha$, either from $(k,l)$ or $(m,n)$. The output is the same as for the relaxed state.

These rules obviously do not depend on the sequence of the application of the operators for both channels.}
\end{table}

% Damit ist eine Vorrangregelung für deterministische Ketten definiert. D.h. spontane Transitionen können deterministische Ketten, die durch kanalbasierte Interaktionen zustande kommen, nicht mehr unterbrechen, weil sie bei der kanalbasierten Synchronisation wegfallen.
% Man kann nicht ohne weiteres Zustände unterscheiden in „gespannte“, in denen ein Eingabezeichen vorliegt und „entspannte“, in denen kein solches Zeichen vorliegt, da es denkbar ist dass man über verschiedene Transitionen in ein und denselben Zustand wechselt, die teilweise Zeichen senden und teilweise nicht. Daher die Frage: Ist das ein Problem für die Elimationsregel oder werden mögliche „Fehler“ allein durch die Forderungen an Konsistenz (inkludiert Wohlgeformtheit) behandelt?

\begin{proposition} \label{prop_restr_commute}
Be ${\cal A}$ and ${\cal B}$ two NFIOAs and $restr$ an operator that restricts the transition relation of ${\cal B}$. Then the following relation holds: $restr({\cal A} \otimes {\cal B}) = {\cal A} \otimes restr({\cal B})$.
\end{proposition}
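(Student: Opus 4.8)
The plan is to exploit the very special shape of the weakly synchronized product. By Definition~\ref{def_weakly_synchronized_product} every transition of ${\cal A}\otimes{\cal B}$ is carried by exactly one factor, so $\Delta_{{\cal A}\otimes{\cal B}}$ splits disjointly into \emph{${\cal A}$-moves} --- transitions $(\vec p,\vec q,\vec i,\vec o)$ whose changed state position and whose non-$\epsilon$ input and output components all lie in the ${\cal A}$-block --- and \emph{${\cal B}$-moves}, with everything in the ${\cal B}$-block. Since $restr$ removes only transitions of ${\cal B}$, it leaves every ${\cal A}$-move of the product untouched, and it deletes a ${\cal B}$-move of the product exactly when it would delete the underlying transition of ${\cal B}$. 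The whole statement should thus reduce to checking that this bookkeeping is consistent on both sides.

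Concretely I would proceed as follows. First, $Q$, $I$, $O$, $q_0$ and $Acc$ are built componentwise in Definition~\ref{def_weakly_synchronized_product} and are not touched by a transition-relation restriction, so $restr({\cal A}\otimes{\cal B})$ and ${\cal A}\otimes restr({\cal B})$ already agree on all of these; only the transition relations remain to be compared. Second, I would compare $\Delta$ block by block: an ${\cal A}$-move is present on the left (since $restr$ does not touch ${\cal A}$-moves) iff it is present on the right (since ${\cal A}$ still provides the underlying transition), the admissibility condition on both sides reducing to ``${\cal A}$ provides $(p_j,q_j,i_j,o_j)$ and the source components are reachable in their respective factors''; a ${\cal B}$-move is present on the left iff the underlying ${\cal B}$-transition is kept by $restr$ and the product source state is still reachable, and present on the right iff that same underlying transition lies in $\Delta_{restr({\cal B})}$ and the source components are reachable. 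Third, everything then comes down to one reachability equation: a product state $(\vec p_{\cal A},\vec p_{\cal B})$ is reachable in $restr({\cal A}\otimes{\cal B})$ iff $\vec p_{\cal A}$ is reachable in ${\cal A}$ and $\vec p_{\cal B}$ is reachable in $restr({\cal B})$, which I would establish by the standard two-way argument --- lift a run of $restr({\cal B})$ into the product by interleaving ${\cal A}$-moves (which $restr$ never removes), and conversely project a run of $restr({\cal A}\otimes{\cal B})$ onto ${\cal B}$, discarding the ${\cal A}$-stutter steps.

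The step I expect to be delicate is precisely this last one, because the restriction operators actually in play --- the $cbr$ operators of Definition~\ref{def_outer_coupling} --- are history-sensitive: whether a ${\cal B}$-move is admissible depends on how its source state was entered, i.e.\ on whether the incoming transition carried channel output, that is on the \emph{excited} versus \emph{relaxed} attribute. One therefore has to argue that interleaving ${\cal A}$-moves between a ${\cal B}$-send and the matching ${\cal B}$-receive does not change which ${\cal B}$-moves may follow. The clean way to secure this is to read ``$restr$ restricts the transition relation of ${\cal B}$'' as meaning that the admissibility of a ${\cal B}$-move in any ambient automaton depends only on the ${\cal B}$-projection of the computation reaching it, so that ${\cal A}$-moves are transparent to $restr$; with that understanding the two transition relations coincide move by move and only the routine componentwise checks above remain.
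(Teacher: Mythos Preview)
The paper's own proof reads, in its entirety, ``The proof is trivial.'' Your unpacking --- splitting $\Delta_{{\cal A}\otimes{\cal B}}$ into ${\cal A}$-moves and ${\cal B}$-moves via Definition~\ref{def_weakly_synchronized_product}, matching the non-$\Delta$ components componentwise, and reducing what remains to a reachability equivalence between $restr({\cal A}\otimes{\cal B})$ and ${\cal A}\otimes restr({\cal B})$ --- is precisely the argument one would give if asked to spell this out, and it is sound.

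Your caveat about the history-sensitivity of $cbr$ is more care than the paper itself invests; the paper's implicit stance is exactly the reading you settle on in your final paragraph, namely that ``$restr$ restricts the transition relation of ${\cal B}$'' means the admissibility of a ${\cal B}$-move depends only on ${\cal B}$-data, so interleaved ${\cal A}$-moves are transparent to it. Under that reading the proposition really is the one-line observation that the product's transition relation is a disjoint union of lifted ${\cal A}$-moves and lifted ${\cal B}$-moves, and a restriction touching only the latter block commutes with the lifting --- which is all the paper is claiming.
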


The proof is trivial.

\begin{definition}
A {\it run} as a sequence of states has to be calculated according to the following rules:
\begin{enumerate}
\item Be $p$ a relaxed state, then every spontaneous transition or every transition which have an open input (not associated with a channel) can be taken.
\item Be $p$ an excited state, then any ensuing transition must process this character.
\end{enumerate}
\end{definition}

It is not a priori clear that this procedure creates something meaningful. A necessary condition obviously is that there has to be a receiving transition for each sending transition for each channel and that the acceptance condition can still be met. I therefore define in accordance with the protocol literature (e.g. \cite{Brand1983}):

\begin{definition}
A channel based restricted automaton is called {\it well formed} if for every channel mediated transition which sends a character (different from $\epsilon$) there exists an induced transition to process it. 
\end{definition}
                                                                                                                                                                                                                                                                                                                                                                                                                                                                   
\begin{definition}
A well formed channel based restricted automaton is called {\it consistent} if for each reachable state value either the acceptance condition is met or there is at least one continuation such that the acceptance condition can be met. 
\end{definition}

As a simple consequence, in a consistent channel based restricted automaton all continuations allow to meet the acceptance condition. Assuming the contrary that one continuation leads to a reachable state where neither the acceptance condition is already met nor any such continuation exists is a direct contradiction to the definition. 
%Damit diese Konstruktionsvorschrift durchgeführt werden kann, muss zu jeder Transition, die ein Zeichen ungleich Epsilon auf einen Kanal sendet, eine empfangende Transition geben, die dieses Zeichen verarbeitet. Die Existenz einer solchen akzeptierenden Transition liegt nicht auf der Hand, sondern ist zu zeigen. Damit wird die Definition der Eigenschaft der „Wohlgeformtheit“ kanalbasierter Interaktion motiviert. 
%Das gleiche gllt für die wichtige Eigenschaft der „Konsistenz“, also der Eigenschaft, dass die Akzeptanzbedingung beider Komponenten weiterhin gleichzeitig erfüllbar bleibt.

%%%%%%%%%%%%%%%%%%%%%%%%%%%%%%%%%%%%%%%%%%%%%%%%%%%%%%
%
\subsection{Protocols}
%
%%%%%%%%%%%%%%%%%%%%%%%%%%%%%%%%%%%%%%%%%%%%%%%%%%%%%%
I would like to focus on the special case where several NFIOAs interact completely via a known set of channels, that is with no open input or output components. 

\begin{definition}
A {\it protocol} is a channel based restricted product automaton with no open input or output components. I call the individual factor automata {\it roles}.
\end{definition}

As was already mentioned in \cite{Reich2010} the definition of ''consistent'' directly entails that a consistent protocol neither has deadlocks in the sense that there is a single reachable state without any continuation such that the acceptance condition holds nor livelocks, characterized by a set of periodical reached states without such a continuation. 
  
\begin{proposition} \label{prop_product_protocol}
A weakly synchronized product of two protocols is again a protocol.
\end{proposition}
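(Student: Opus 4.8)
The plan is to rewrite ${\cal P}_1\otimes{\cal P}_2$ into the canonical shape of a protocol, namely a channel based restriction of a weakly synchronized product of roles in which no input or output component is left open. By the definition of a protocol I may write ${\cal P}_1 = cbr_{C_1}\bigl(\bigotimes_{i}{\cal A}_i\bigr)$ and ${\cal P}_2 = cbr_{C_2}\bigl(\bigotimes_{j}{\cal B}_j\bigr)$, where the ${\cal A}_i$ and the ${\cal B}_j$ are the respective roles and $C_1$, $C_2$ are finite sets of channels with the property that, for $\bigotimes_i{\cal A}_i$, every input component is the target of some channel in $C_1$ and every output component is the source of some channel in $C_1$, and likewise for $\bigotimes_j{\cal B}_j$ and $C_2$; this is exactly what ``no open input or output components'' means.

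The construction then proceeds in three moves. First, since the weakly synchronized product of Definition~\ref{def_weakly_synchronized_product} depends only on the set of factor automata and not on any bracketing or ordering, $\bigl(\bigotimes_i{\cal A}_i\bigr)\otimes\bigl(\bigotimes_j{\cal B}_j\bigr)$ may be identified with a single product $\bigotimes_k{\cal C}_k$ over the disjoint union of the two role families; up to a permutation of coordinates, its input (resp.\ output) alphabet is then the Cartesian product of the input (resp.\ output) alphabets of ${\cal P}_1$ and ${\cal P}_2$. Second, since $cbr_{C_1}$ only ever deletes transitions contributed by the ${\cal A}$-block and leaves the ${\cal B}$-block untouched, and symmetrically for $cbr_{C_2}$, two applications of Proposition~\ref{prop_restr_commute} (once to the right factor, and once, by the symmetry of $\otimes$, to the left) yield ${\cal P}_1\otimes{\cal P}_2 = cbr_{C_1}\bigl(cbr_{C_2}\bigl(\bigotimes_k{\cal C}_k\bigr)\bigr)$. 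Third, every channel in $C_1$ refers only to coordinates of the ${\cal A}$-block and every channel in $C_2$ only to coordinates of the ${\cal B}$-block, so any channel from $C_1$ and any channel from $C_2$ are disjoint in the sense of the commutativity result for the $cbr$-operator established above; applying that result repeatedly -- extending it from a single pair of channels to the two finite channel sets -- lets me fuse the two operators into $cbr_{C_1\cup C_2}$, so that ${\cal P}_1\otimes{\cal P}_2 = cbr_{C_1\cup C_2}\bigl(\bigotimes_k{\cal C}_k\bigr)$ is a channel based restricted product automaton. It has no open component either, since $C_1$ leaves no component of the ${\cal A}$-part open and $C_2$ leaves no component of the ${\cal B}$-part open, so $C_1\cup C_2$ leaves no input or output component of $\bigotimes_k{\cal C}_k$ open; hence ${\cal P}_1\otimes{\cal P}_2$ is a protocol with role set $\{{\cal C}_k\}$.

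The step I expect to be the real, though still routine, obstacle is justifying the first two moves at the level of transition relations rather than merely formally: the transition relation in Definition~\ref{def_weakly_synchronized_product} is phrased via reachability of the component states, and the $cbr$-operator carries its own reachable-state bookkeeping in items~2 and~3 of Definition~\ref{def_outer_coupling}, so one has to check that restricting each block first and then taking the product yields exactly the transitions obtained by forming the full product $\bigotimes_k{\cal C}_k$ first and then applying $cbr_{C_1\cup C_2}$ -- equivalently, that the excited/relaxed status a state acquires in ${\cal P}_1\otimes{\cal P}_2$ coincides with the one assigned to it by $cbr_{C_1\cup C_2}$. Once this compatibility is established, the remainder is just the bookkeeping above together with Proposition~\ref{prop_restr_commute} and the $cbr$-commutativity result.
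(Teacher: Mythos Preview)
Your proof is correct and follows the same route as the paper's: establish the identity ${\cal P}_1\otimes{\cal P}_2 = cbr_{C_1\cup C_2}\bigl((\bigotimes_i{\cal A}_i)\otimes(\bigotimes_j{\cal B}_j)\bigr)$ by pulling the $cbr$-operators outside the product and then merging them over the union of channel sets. You are in fact more careful than the paper, which simply declares the identity ``obviously the case'' without explicitly invoking Proposition~\ref{prop_restr_commute}, the $cbr$-commutativity result, or the verification that no input or output component is left open.
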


\begin{proof} We have to show that for two protocols ${\cal P}_1 = cbr_{C_1}\bigotimes {\cal A}_i$ and ${\cal P}_2 = cbr_{C_2}\bigotimes {\cal B}_i$ with two sets of channels $C_1$ and $C_2$ the following equation holds: 

${\cal P}_1 \otimes {\cal P}_2 = (cbr_{C_1}\bigotimes {\cal A}_i) \otimes (cbr_{C_2}\bigotimes {\cal B}_i) = cbr_{C_1 \cup C_2} ((\bigotimes {\cal A}_i) \otimes (\bigotimes {\cal B}_i)$. 

\noindent This is obviously the case, as the restriction relates only either to the weakly synchronized ${\cal A}_i$ or ${\cal B}_i$ but never simultaneously to both.  Therefore it gives the same result, if we weakly synchronize two channel based restricted weakly synchronized products or create the channel based restriction over the set union of channels of the weakly synchronized products of all automata. 
\end{proof}

%%%%%%%%%%%%%%%%%%%%%%%%%%%%%%%%%%%%%%%%%%%%%%%%%%%%%%
%
\section{Condition Based Restriction and Processes} \label{s_condbr_processes}
%
%%%%%%%%%%%%%%%%%%%%%%%%%%%%%%%%%%%%%%%%%%%%%%%%%%%%%%
The second mechanism to couple NFIOAs represents the causal relation between the input and output of a single system as an inner coupling determined by coordination rules. The goal is to restrict the transition set of the weakly synchronized product automaton such that at least a quasi-determinism of the formerly nondeterministic transition set is achieved in a sense that from each reachable state there is at most one transition for each input character, including the empty character, while the factor NFIOAs (the roles) still can be regained by projection. 

In Fig \ref{fig_mutual_exclusion_man_in_the_middle} and \ref{fig_mutual_exclusion_synchronized_product_inner} an example of such an inner coupling for the mutual exclusion protocol is illustrated where a process realizes a ''man in the middle''. It just pass-through the incoming requests and thereby combines both protocol roles complementary to example \ref{sss_first_interaction}.

\begin{figure}[ht]
\begin{center}
\includegraphics[width=8cm]{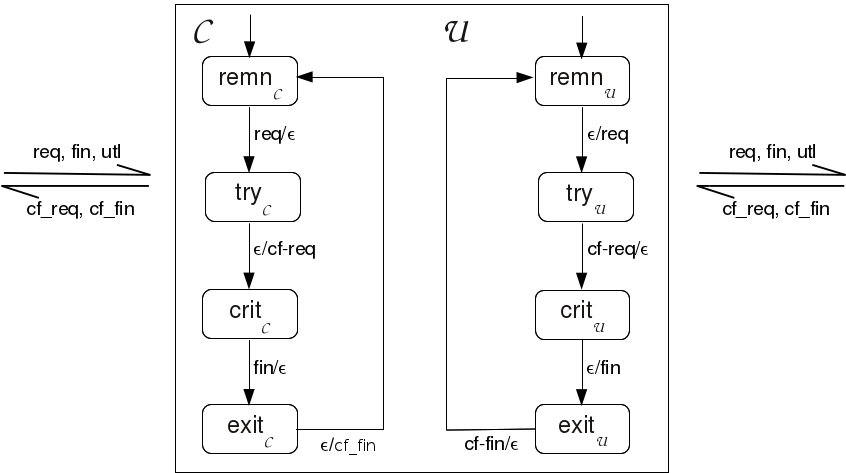}
\end{center}
\caption[]
{\label{fig_mutual_exclusion_man_in_the_middle} State diagram of the two (still independent) roles of a ''man in the middle'' process within the mutual exclusion protocol.}
\end{figure}

\begin{figure}[ht]
\begin{center}
\includegraphics[width=12cm]{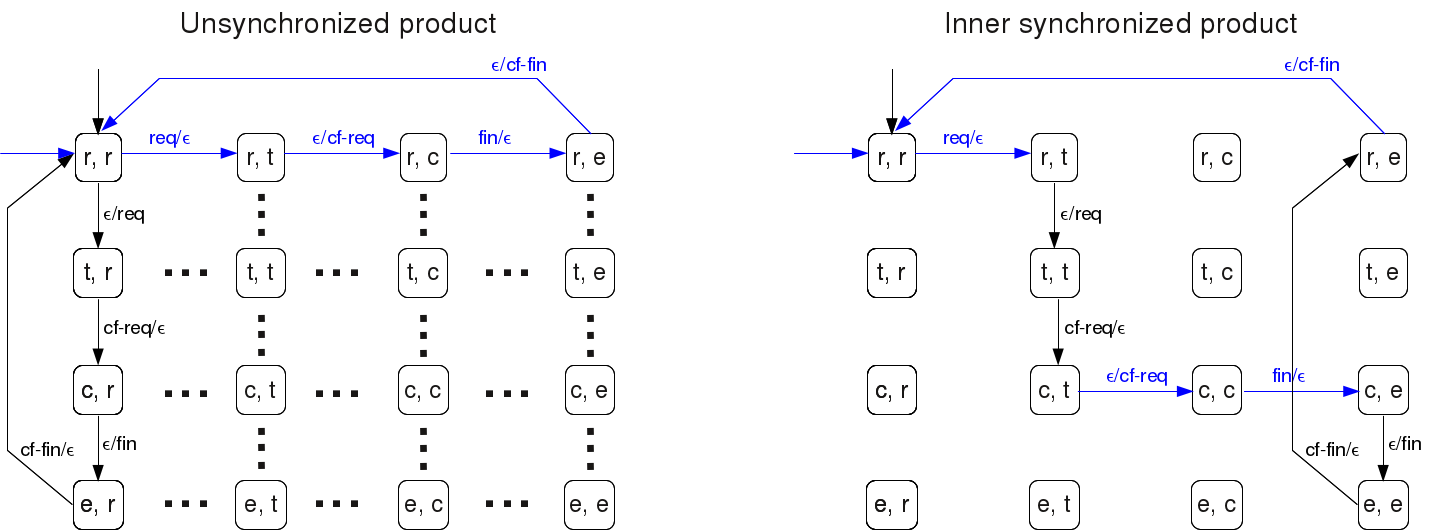}
\end{center}
\caption[]
{\label{fig_mutual_exclusion_synchronized_product_inner} State diagram of the two roles of the mutual exclusion protocol, acting as a ''man in the middle''. Left, you see the weakly synchronized product of both roles. Right, you see the condition based restricted product.} \end{figure}

\begin{definition}
A {\it quasi deterministic} NFIOA has at most one transition for each input sign, including the empty character, from any reachable state. 
\end{definition}

\begin{definition} \label{def_projected_automaton}
Be ${\cal B}$ an NFIOA and $\pi = (\pi_Q, \pi_I, \pi_O)$ with $\pi_Q: Q_{\cal B} \rightarrow Q_{\cal B}$, $\pi_I: I_{\cal B} \rightarrow I_{\cal B}$, and $\pi_O: O_{\cal B} \rightarrow O_{\cal B}$ a projection function\footnote{A projection function $\pi$ has the special property that $\pi = \pi \circ \pi$.}.
Then the projected automaton ${\cal A} = \pi({\cal B})$ is given by $Q_{\cal A} = \pi_Q(Q_{\cal B})$, $I_{\cal A} = \pi_I(I_{\cal B})$, $O_{\cal A} = \pi_O(O_{\cal B})$, ${q_0}_{\cal A} = \pi_Q({q_0}_{\cal B})$, $Acc_{\cal A} = \pi_Q(Acc_{\cal B})$, $\Delta_{\cal A} = \{(p',q',i',o')| (p, q, i, o) \in \Delta_{\cal B} \,\mbox{and}\, p'=\pi_Q(p), q'=\pi_Q(q), i'=\pi_I(i), o'=\pi_O(o)\}$.
\end{definition}

The condition based restriction is supposed to eliminate all transitions from non-reachable states as well as all transitions for which a given condition is true. 

\begin{definition}
Be ${\cal T}$ an NFIOA. A set of conditions $E$ parametrizes the {\it condition based restriction ($cond_E$) operator} such that ${\cal P} = cond_E({\cal T})$ with $Q_{\cal P} = Q_{\cal T}$, $I_{\cal P} = I_{\cal T}$, $O_{\cal P} = O_{\cal T}$, $Acc_{\cal P} = Acc_{\cal T}$, and $\Delta_{\cal P} = \{(p,q,i,o)\in\Delta_{\cal T}| p \,\mbox{is a reachable state of ${\cal T}$}\,\\ \mbox{and}\, e(p,q,i,o)\in E \, \mbox{is false}\}$.
\end{definition}

The interesting property of the condition based restriction is, that it may not affect the projection:

\begin{definition}
A projection $\pi$ of an NFIOA ${\cal T}$ is called {\it unaffected} by a condition based restriction $cond_E$ if $\pi({\cal T}) = \pi(cond_E({\cal T}))$. 
\end{definition}
                                                                                                                                                                                                                                                                                                                                                                                                                                                 
\begin{definition}
A condition restricted automaton is called {\it consistent} if if for each reachable state value either the acceptance condition is met or there is at least one continuation such that the acceptance condition can be met.
\end{definition}

%%%%%%%%%%%%%%%%%%%%%%%%%%%%%%%%%%%%%%%%%%%%%%%%%%%%%%
%
\subsection{Separating inner and outer coupling}
%
%%%%%%%%%%%%%%%%%%%%%%%%%%%%%%%%%%%%%%%%%%%%%%%%%%%%%%
Now, I would like to justify that looking at outer and inner coupling of roles can be separated. I first state that

\begin{proposition} \label{prop_channel_condition_commute}
Channel and condition based restriction commute.
\end{proposition}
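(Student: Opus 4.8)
The plan is to show that for any NFIOA $\mathcal{T}$ equipped with a channel $(k,l)$ and a set of conditions $E$, applying the two restrictions in either order yields the same transition relation, i.e.\ $cbr_{(k,l)}(cond_E(\mathcal{T})) = cond_E(cbr_{(k,l)}(\mathcal{T}))$. Since both operators leave $Q$, $I$, $O$, $Acc$ and $q_0$ untouched, the only thing to verify is the equality of the two resulting transition relations; everything reduces to a set-theoretic identity on $\Delta_{\mathcal{T}}$.

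First I would unfold both sides explicitly. The channel operator acts by a \emph{local} rule: whether a transition $(p,q,i,o)$ survives $cbr_{(k,l)}$ depends only on whether $p$ is reachable, on the type (excited/relaxed) of $p$ — which is itself determined by the output character of the transition that entered $p$ — and on whether $i$ is the channel input $\epsilon[o',l]$ matching the character $o'$ that excited $p$, or an open input. The condition operator acts by a \emph{pointwise} rule: $(p,q,i,o)$ survives $cond_E$ iff $p$ is reachable and no $e\in E$ with $e(p,q,i,o)$ is true. The key observation is that these two membership criteria are \emph{independent predicates on the same edge}: the $cbr$-criterion is a predicate $\Phi_{cbr}(p,q,i,o)$ depending on the reachability of $p$ and on the channel-pattern of $(i,o)$ relative to $p$'s entry, while the $cond$-criterion is a predicate $\Phi_{cond}(p,q,i,o)$ depending on reachability of $p$ and on $E$. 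An edge is in the doubly-restricted relation, computed either way, iff it satisfies $\Phi_{cbr}\wedge\Phi_{cond}$, so the two orders agree \emph{provided} the notion of ``reachable state'' and of ``entering output character / excitation type'' is the same whether we compute it in $\mathcal{T}$, in $cbr_{(k,l)}(\mathcal{T})$, or in $cond_E(\mathcal{T})$.

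The main obstacle, therefore, is precisely that caveat about reachability: restriction removes edges, so a state reachable in $\mathcal{T}$ may cease to be reachable after one restriction, and a state's excitation type is defined in terms of how it was entered, which also depends on which edges remain. I would handle this by arguing that reachability is \emph{monotone and order-insensitive} here in the relevant sense: a state $p$ is reachable in the doubly-restricted automaton iff there is a path to it all of whose edges satisfy $\Phi_{cbr}\wedge\Phi_{cond}$, and this condition on paths does not depend on the order in which the two edge-filters were imposed. One shows by induction on path length that the set of states reachable via $\Phi_{cbr}\wedge\Phi_{cond}$-edges coincides with the reachable set used when building $cbr_{(k,l)}(cond_E(\mathcal{T}))$ and with that used when building $cond_E(cbr_{(k,l)}(\mathcal{T}))$; along the same induction one checks that the excitation type of each such reachable state — inherited from the output component of the last edge on a witnessing path — is well-defined and identical in both constructions. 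Given this, the edge-level identity $\Phi_{cbr}\wedge\Phi_{cond}$ closes the proof, and as in the commutation proof for two channels one may simply remark that the surviving transition classes are characterised by a conjunction of conditions that is manifestly symmetric in the two operators.
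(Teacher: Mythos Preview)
Your proposal is correct and follows essentially the same approach as the paper: an induction from the initial state over the set of reachable states, observing that at each level the transitions eliminated by the two operators together are simply the union of those eliminated by each separately, so the resulting reachable set (and the excited/relaxed status of each state) is independent of the order of application. If anything, you are more explicit than the paper about the one genuine subtlety --- that reachability and excitation type could in principle depend on which restriction is applied first --- and your path-length induction handles exactly what the paper's ``same argument applies inductively'' sentence is gesturing at.
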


\begin{proof}
To prove the proposition we have to show again that the rules to restrict the transition relation do not depend on the sequence of application of both operators. 
Starting from the initial state, the set of reachable states is build up inductively. For the initial state, the set of transition to be eliminated is just the union of the to-be-eliminated transitions from channel and condition based restriction. The thereby determined set of reachable next level nodes is therefore independent of the sequence of application of both operators. The same argument applies inductively for the transitions starting from this set as well as for any further set of reachable nodes. 
So in fact, as both operator definitions relate to reachable states, the rules for both, excited states and relaxed states of definition \ref{def_outer_coupling} become only modified in the sense that it has to be checked whether the outgoing transitions are to be additionally excluded by the condition based restriction.
\end{proof}

With this in mind, let us assume to have two independent protocols ${\cal P}_1 = cbr_{C_1}(\bigotimes {\cal A}_i)$ and ${\cal P}_2 = cbr_{C_2}(\bigotimes {\cal B}_i)$ where the roles ${\cal A}_n$ and ${\cal B}_1$ actually relate to the same process. The effective inner coupling between these two roles is expressed by a condition based restriction operator $cond_E$ relating only to ${\cal A}_n$ and ${\cal B}_1$. We then have: 
\begin{eqnarray*}
\lefteqn{cond_E({\cal P}_1 \otimes {\cal P}_2)}\\
 & \stackrel{P.\ref{prop_product_protocol}}{=} & cond_E(cbr_{C_1 \cup C_2}((\bigotimes {\cal A}_i) \otimes (\bigotimes{\cal B}_i )))\\
 & \stackrel{P. \ref{prop_restr_commute} \& \ref{prop_channel_condition_commute} }{=} & cbr_{C_1 \cup C_2}((\bigotimes_{i<n} {\cal A}_i) \otimes cond_E({\cal A}_n \otimes {\cal B}_1) \otimes (\bigotimes_{i>1}{\cal B}_i )))
\end{eqnarray*}

So, applying the condition based restriction to some large protocol, involving many different roles boils down to applying it only to the roles that are actually coordinated within a given process. 
In the general case, where the condition based restriction relates to arbitrary NFIOA parts of the protocols, these NFIOA would have to be permuted to allow for this separation.

%%%%%%%%%%%%%%%%%%%%%%%%%%%%%%%%%%%%%%%%%%%%%%%%%%%%%%
%
\subsection{Processes}
%
%%%%%%%%%%%%%%%%%%%%%%%%%%%%%%%%%%%%%%%%%%%%%%%%%%%%%%
Now, what is a process? As a first try, we could define a process within the presented framework as a product automaton from at least two roles of different protocols that are restricted by some coordination conditions. Thereby, a process coordinates at least two different interactions. 

Fig. \ref{fig_mutual_exclusion_token_alg_sync1a} illustrates the example of the introduction. It shows the weakly synchronized product automaton of the two different roles of a resource administration process together with the to-be-eliminated transitions according to the following coordination conditions:

\begin{enumerate}
\item Being in state value $(*, abst)$, no spontaneous transition to state value $(crit, *)$ is allowed. That is, the token is needed for any entry into the critical region.
\item Being in state value $(try, *)$, no spontaneous transition to state value $(*, abst)$ is allowed. That is, after an incoming request, it is not allowed to dispose the token.
\item Being in state value $(crit, *)$, no spontaneous transition to state value $(*, abst)$ is allowed. That is, after an incoming request has been acknowledged, it is not allowed to dispose the token.
\item Being in state value $(*, interm)$, no spontaneous transition to state value $(remn, *)$ is allowed. That is, before sending a confirmation for finalization, the token has to be disposed.
\end{enumerate}

\begin{figure}[ht]
\begin{center}
\includegraphics[width=10cm]{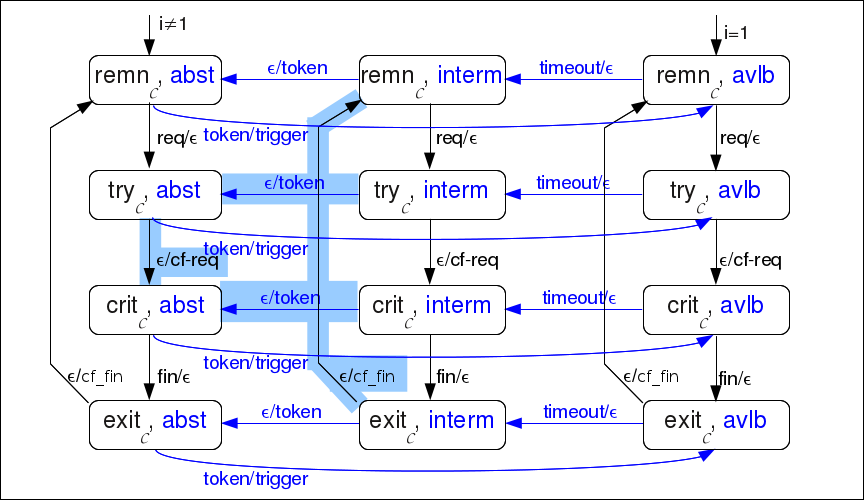}
\end{center}
\caption[]
{\label{fig_mutual_exclusion_token_alg_sync1a} The weakly synchronized product automaton between the two different roles of a resource administration process is shown. The transitions which are eliminated for coordination are marked.}
\end{figure}

%%%%%%%%%%%%%%%%%%%%%%%%%%%%%%%%%%%%%%%%%%%%%%%%%%%%%%
\subsubsection{Determination}
%%%%%%%%%%%%%%%%%%%%%%%%%%%%%%%%%%%%%%%%%%%%%%%%%%%%%%
However, we not only want quasi deterministic NFIOAs but deterministic ones because deterministic ones specify finite systems (see appendix and e.g. \cite{Reich2010}).
Fig. \ref{fig_nfioa_overspecify_behavior} shows a simple example, where it seems straight forward to eliminate a spontaneous transition. With the usual interpretation of I/O automata, the two automata of Fig \ref{fig_nfioa_overspecify_behavior} would be viewed as being behaviorally equivalent. 

\begin{figure}[ht]
\begin{center}
\includegraphics[width=10cm]{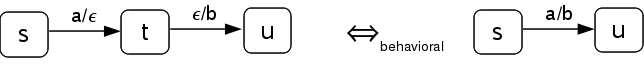} 
\end{center}
\caption[]
{\label{fig_nfioa_overspecify_behavior} \rm Do the NFIOA on the left and on the right specify the same automaton behavior? In case of a channel based restricted product automaton it depends on what the rest of the automaton is doing.}
\end{figure}

But within a channel based restricted product automaton, both behaviors are equivalent only if the rest of the automaton waits until it receives character '$b$' - or, in other words, if any other factor automaton process the result of the first automaton synchronously. With so called asynchronous processing, another factor automaton could transit and show some behavior. Eliminating the intermediate state implies to deprive all other factor automata from their chance to transit, possibly changing the behavior of the entire product automaton.
Eliminating the spontaneous transitions changes the ''atomicity of time steps''. 

Another problem is that the set of state values might change. As the acceptance component directly relates to the set of state values, it must be restated. Also, the projection relation between the restricted product automaton and its constituting roles gets lost. 
A third problem is that we might have to aggregate several output characters in one step. 

\begin{figure}[ht]
\begin{center}
\includegraphics[width=8cm]{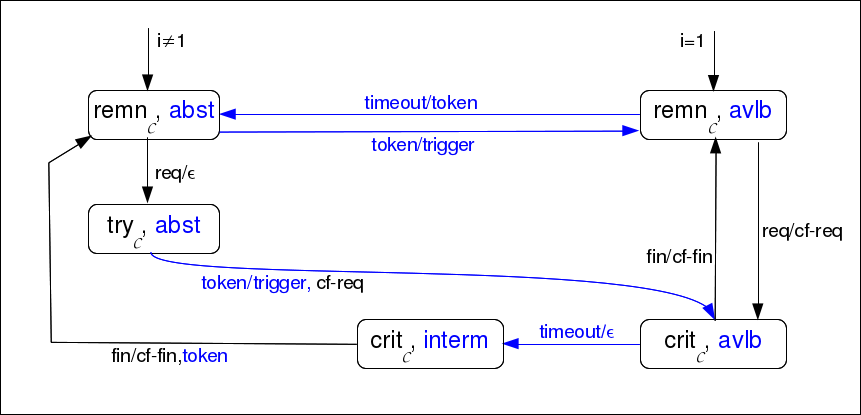}
\end{center}
\caption[]
{\label{fig_mutual_exclusion_token_alg_sync5} The completely deterministic resource administration process.}
\end{figure}

It appears that moving from the quasi deterministic automaton to a deterministic one, the behavioral equivalence of the complete interaction has to be checked. For the quasi deterministic automaton of Fig. \ref{fig_mutual_exclusion_token_alg_sync1a}, representing the resource administration process, a behaviorally equivalent fully deterministic automaton can be found which is illustrated in Fig. \ref{fig_mutual_exclusion_token_alg_sync5}.
So, deterministic processes do coordinate at least two different interactions, but they may have lost their resemblance to the original roles used for the definition of their interactions they are involve in.

%%%%%%%%%%%%%%%%%%%%%%%%%%%%%%%%%%%%%%%%%%%%%%%%%%%%%%
%
\section{Related Work} \label{s_related_work}
%
%%%%%%%%%%%%%%%%%%%%%%%%%%%%%%%%%%%%%%%%%%%%%%%%%%%%%%
There have been many different approaches to the issue of describing processes and their interactions, many of them can be subsumed under Richard Mayr's \cite{Mayr1999} process rewrite systems. He introduced a unifying view encompassing many formalisms based on named actions like finite state systems, Petri nets, pushdown processes, etc. The most general class he named process rewrite systems.

It presupposes a countably infinite set of atomic actions $Act = \{a, b, c,\dots\}$ and a countably infinite set of process constants $Const = \{\epsilon, X, Y, Z, \dots\}$. All process terms that describe the states of the system have the form $t = \epsilon|X|t_1.t_2|t_1|| t_2$. The dynamics of a system is described by a finite set of rules $\Delta$ of the form $t_1 \stackrel{a}{\rightarrow} t_2$. For every $a\in Act$, the transition relation $\stackrel{a}{\rightarrow}$  is the smallest relation that satisfies the inference rules
\[
\frac{(t_1 \stackrel{a}{\rightarrow} t_2)\in \Delta}{t_1 \stackrel{a}{\rightarrow} t_2}, \quad \frac{t_1 \stackrel{a}{\rightarrow} t_1'}{t_1||t_2 \stackrel{a}{\rightarrow} t_1'||t_2},\quad \frac{t_1 \stackrel{a}{\rightarrow} t_1'}{t_1.t_2 \stackrel{a}{\rightarrow} t_1'.t_2}
\]

How would the first interaction, the protocol of mutual exclusion (see. section \ref{sss_first_interaction}) be described? Instead of denoting the input and output, each transition would have to be arbitrarily named. E.g. for the user process ${\cal U}$: $rem \stackrel{try()}{\rightarrow} try$, $try \stackrel{cf-req()}{\rightarrow} crit$, $crit \stackrel{finalize()}{\rightarrow} exit$, $exit \stackrel{cf-fin()}{\rightarrow} rem$ and for the resource administration process: $rem \stackrel{request()}{\rightarrow} try$, $try \stackrel{send-cf-req()}{\rightarrow} crit$, $crit \stackrel{exit()}{\rightarrow} exit$, $exit \stackrel{send-cf-exit()}{\rightarrow} rem$. 

Because these names or labels are arbitrary, they don't help in constructing the transition relation of the protocol as a restricted product automaton. Here is how the protocol would have to be described:  
\begin{eqnarray*}
rem_{\cal U} || rem_{\cal C} & \stackrel{try()}{\rightarrow} & try_{\cal U}|| rem_{\cal C}\\
try_{\cal U} || rem_{\cal C} & \stackrel{request()}{\rightarrow} & try_{\cal U}|| try_{\cal C}\\
& \dots& 
\end{eqnarray*}
One would not be allowed to mix this with the rules for the single protocol role. As a result, we can not use our knowledge of the parts (the roles) to construct the whole (the interactions). 

In essence all approaches that are subsumed with process rewrite systems assume the ''atomicity of actions'' and provide names for identical actions only because they appear differently under different perspectives. 

Other approaches that are also based on naming of actions suffer from the same shortcomings, for example Robin Milner's calculus of communicating systems \cite{Milner1980,Milner1989}, Charles A. R. Hoare's communicating sequential processes \cite{Hoare1985} or Nancy Lynch's distributed algorithms \cite{Lynch1996}.

The same holds for the currently quite popular BPMN proposal \cite{BPMN_20} to describe processes and their interactions driven by large industry players. There is some similarity between what the BPMN standard calls a private (internal) process and a process in my sense, an abstract (public) process and a role and a choreography or collaboration (global) process and a protocol. However, if a private process or orchestration of services is a combination of executable functionality, it remains unclear, what an 'end-to-end' process should be. The logical difference between a private and a public process is not seen as a projection but some sort of abstraction where only those activities that are used to communicate to other processes plus the order of these activities, are included. A choreography describes the way processes coordinate their interactions, which is claimed to allow the process designer to plan their processes for inter-operation without introducing conflicts, but no formal connection is made in 
the specification.

Also, a comparison to game theory is interesting, The interaction model of game theory is similar to the protocol model. As I have shown in \cite{Reich2009}, a protocol, enriched by a decision alphabet, can be mapped onto a corresponding extensive game without its payoff evaluation. With games, the nondeterminism problem of interaction, that the interactions don't determine the actions, is solved by introducing ''decision'' together with a payoff function. Then the decisions determine the actions and the question how to act becomes an optimization problem. In contrast, my approach to determine single interactions by other interactions leads to a coordination problem.
However, there are already game theoretic methods to describe the simultaneous play of multiple games by single players which naturally are very similar to their process analogons (e.g. \cite{DBLP:conf/clima/GhoshRS10}).

%%%%%%%%%%%%%%%%%%%%%%%%%%%%%%%%%%%%%%%%%%%%%%%%%%%%%%
%
\section{Discussion} \label{s_discussion}
%
%%%%%%%%%%%%%%%%%%%%%%%%%%%%%%%%%%%%%%%%%%%%%%%%%%%%%%
In this article I presented an approach to describe processes and their interactions with the same technical means, namely nondeterministic input output automata (NFIOAs). I presented two complementary ways to represent the causal relation between the output and the input of different systems, which I called ''outer coupling'', and between the input and output of the same system, which I called ''inner coupling''. Extending \cite{reich2011}, an operator based approach was introduced.

The openness of the interaction networks is represented by the incompleteness of the interaction descriptions in the sense of protocols with respect to the interaction coordinating processes. Constructing a process, one needs to know the rest of the interaction network only up to the next interaction partner's role in the own interactions.

Why do protocols provide so many spontaneous transitions, while it is so difficult to get rid of this spontaneity when constructing deterministic processes? It is this spontaneity which provides the freedom to combine different interactions to coordinating processes. So, protocol design strives for providing role descriptions which can be most easily coordinated with other roles, a property which naturally gets lost in constructing deterministic processes.

Currently, instead of ``protocol'' and ``processes'', often the more fashionable terms ``choreography'' and ``orchestration'' are used, especially in the context of a so called ``service oriented architecture (SOA)'' \cite{SchulteNatis1996}. If we identify ``choreography'' with ``protocol'' and ``orchestration'' with ``process'', we see that a single choreography is not supposed to specify the behavior of a system completely as well as an orchestration is not supposed to specify any ``loose coupling'' between two different applications. On the contrary it becomes understandable why a modeling of processes as stepwise executable activities which is not based on the interaction descriptions leads to strong centralization tendencies and tight coupling of the interacting systems.

The straight forward composition mechanism seems quite attractive for combining different roles flexibly into quasi deterministic processes, just by adding a couple of rules. Unfortunately, things become more complex if we strive for determinism, which we must if we want to execute our processes with the traditional machines.\\[0.5cm]

\noindent
{\bf Acknowledgment} Writing this article would not have been possible without an intense discussion with Hans-Jörg Kreowski beforehand and a daylong return journey by train. I also thank the reviewers for their helpful comments. 

\bibliographystyle{eptcs}
  \bibliography{/home/johannes/bib/theoret_med,/home/johannes/bib/philosophy,/home/johannes/bib/informatics,/home/johannes/bib/develop,/home/johannes/bib/soziologie}

%%%%%%%%%%%%%%%%%%%%%%%%%%%%%%%%%%%%%%%%%%%%%%%%%%%%%%
%
\section{Appendix} \label{s_appendix}
%
%%%%%%%%%%%%%%%%%%%%%%%%%%%%%%%%%%%%%%%%%%%%%%%%%%%%%%
The importance of finite I/O automata for the theory of processes stems from the fact that DFIOAs can specify finite systems (in a physical sense). Such a system is represented as structure which comprises internal as well as externally accessible states in the sense of time dependent properties together with function which describes the time evolution of the states\footnote{This use of then term 'state' is the reason I used the term 'state values' where normally in automata theory the term 'state' is used.}.

\begin{definition} \label{def_system}
A {\it finite system} is defined by a tuple ${\cal S}=(T, succ, Q, I, O, x, in, out, f)$. $T$ is the enumerable set of time values starting with 0 such that $succ:T\rightarrow T$ is the invertible time successor function. $Q, I$ and $O$ are the finite sets of state values for the internal, input and output states $(x, in, out):T\rightarrow (Q, I^\epsilon, O^\epsilon)$. $f=({f^{ext}}, {f^{int}}):I^\epsilon\times Q\rightarrow O^\epsilon \times Q$ is a  function describing the time evolution or system operation updating the internal and output state in one time step for each $t \in T$:
\[{out(t+1) \choose x(t+1)} = {{f^{ext}}(in(t), x(t)) \choose {f^{int}}(in(t), x(t))}\,.\] 
The state values of the I/O-states are also called {\it characters}. The $n$-fold application of the time successor function $succ$ is written as $t+_{\cal S}n:=succ^n_{\cal S}(t)$. 
\end{definition}

\begin{definition} A DFIOA ${\cal A} = (Q_{\cal A}, I_{\cal A}, O_{\cal A}, q_0, Acc, \Delta)$ {\it specifies} a finite system ${\cal S} = (T, succ, Q_{\cal S}, I_{\cal S},\\ O_{\cal S}, x, in, out, f)$ if $Q_{\cal S} \subseteq Q_{\cal A}$,  $I_{\cal S} \subseteq I_{\cal A}$, $O_{\cal S} \subseteq O_{\cal A}$, $x(0) = q_0$ and for any point in time $t \geq 0$ in every possible sequence, $(x(t), x(t+1), in(t), out(t+1)) \in \Delta_{\cal A}$. 
\end{definition}

\end{document}